\DeclareMathAlphabet\mathbfcal{OMS}{cmsy}{b}{n}
\newtheorem{theorem}{Theorem}
\newtheorem{proposition}{Proposition}
\newtheorem{corollary}{Corollary}
\newtheorem*{proof of Theorem*}{Proof of Theorem 3}
\newtheorem{proof of Lemma}{Proof of Lemma}
\newtheorem{definition}{Definition}
\newtheorem{lemma}{Lemma}
\newtheorem{remark}{Remark}
\begin{document}

\title{Scheduling Status Updates to Minimize Age of Information with an Energy Harvesting Sensor}

\author{\IEEEauthorblockN{Baran Tan Bacinoglu, Elif Uysal-Biyikoglu}
\IEEEauthorblockA{ METU, Ankara, Turkey.
E-mail:  barantan@metu.edu.tr, uelif@metu.edu.tr,}
}

\bibliographystyle{IEEEtran}

\maketitle

\def\eg{\emph{e.g.}}
\def\ie{\emph{i.e.}}

\begin{abstract}
Age of Information is a measure of the freshness of status updates in monitoring applications and update-based systems. We study a real-time remote sensing scenario with a sensor which is restricted by time-varying energy constraints and battery limitations. The sensor sends updates over a packet erasure channel with no feedback. The problem of finding an age-optimal
threshold policy, with the transmission threshold being a function of the energy state and the estimated current age, is formulated. The average age is analyzed for the unit battery scenario under a memoryless energy arrival process. Somewhat surprisingly, for any finite arrival rate of energy, there is a positive age threshold for transmission, which corresponding to transmitting at a rate lower than that dictated by the rate of energy arrivals. A lower bound on the average age is obtained for general battery size. 

\end{abstract}

\section{Introduction}

The \emph{Age of Information} (AoI), or status age, was proposed \cite{ Kaul2011, Kaul2012} as a quality metric for monitoring applications \cite{ZviedrisESMS10, blueforce} where the freshness of real-time information is crucial. AoI is defined as the amount of time that has elapsed since the most recently received status update or sample was generated at the source. Accordingly, the smaller the age, the closer the information at the receiver to the actual status at the sender.

Recent literature contains analyses of age under various service policies and queuing models \cite{Kaul2011, Kaul2012}, \cite{Ephremides2013, Ephremides2014, Huang2015, Pappas2015, Ephremides2016, Najm2016, DBLP:journals/corr/YatesK16}. A common observation in these studies was that the minimization of age distinctly differs from delay or throughput optimization. Besides these queuing theoretic studies, the transmission scheduling setup in \cite{YinSunInfocom2016} also revealed that throughput and delay optimal update policies can be suboptimal with respect to age. Within the user scheduling formulation in \cite{Kadota2016} for example, average-age optimal scheduling policies turn out to be throughput-optimal, whereas the converse is not always true. 

Optimal transmission of a discrete Markov source observed by an energy harvesting sensor was studied in 
\cite{Nayyar2013} where the optimal transmission strategy of the sensor was described by a threshold policy.

Minimizing age under energy harvesting constraints was considered in \cite{7308962}, which investigated a scenario where an energy harvesting sender device generates and transmits status updates ``at will". Age optimal policies were found under the assumption that the transmission delay for status updates is negligible or deterministic. The model  captured operational scenarios where inter-update intervals are much greater than transmission delay. 

Energy constraints were also considered in \cite{Yates2015}, where the update rate is subject to an upperbound dictated by the average power available to the sensor. In \cite{Yates2015}, in contrast to \cite{7308962}, the randomness is not in the energy arrival process but in the transmission delay or service time. Interestingly, this results in a similar ``lazy" policy for packet transmissions as in \cite{7308962}.  The result was further generalized in \cite{YinSunInfocom2016} showing that the optimal policy will impose a non-zero waiting time in a large class of service time distributions, thus sending updates at a slower rate than that allowed by constraints. A recent study \cite{YinSunISIT2017} similarly showed that, in remote estimation of a Wiener process where samples are forwarded over a channel with random delay, the optimal sampling policy entails a non-zero waiting time between samples even in the absence of a sampling-rate constraint.  

In this paper, we follow the framework in \cite{7308962}
with two major differences: unlike \cite{7308962}, we consider finite energy storage capability. This places a limit on the burstiness of the updating process as updating opportunities (i.e. energy packets) that are not used may be wasted. Secondly, we consider an infinite horizon problem as opposed to a finite time window. As suggested in \cite{7308962}, we consider update policies of threshold-type which are optimal in continuous time for the problem of minimizing average age under stationary energy harvesting
processes.

The contributions of the paper may be summarized as follows: Long term average age achieved by a threshold type policy is analyzed for a packet erasure channel (as in \cite{Huang2015}) where status updates are lost with probability $1-p$. Under a Poisson energy
arrival process, a lower bound for the minimum average age achieved over this channel is expressed based on the minimum average achieved over the lossless
channel ($p = 1$). For a system with a unit-energy buffer, the average age achieved over the lossless channel is derived for Poisson energy arrivals. It is observed that regardless of the energy arrival rate there is a positive age threshold for transmission, such that the average age can be lower that the average age achieved by the zero-wait policy. This is somewhat unexpected as zero-wait is the only energy-conserving and work-conserving policy. Moreover, we show, echoing the result by Yates \cite{Yates2015}, that the age optimal threshold policy sends updates at a slower rate than the upperbound on update rate imposed by the energy arrival process. In particular,  the mean inter-update duration is $\approx 1.307$ times longer than the mean inter-arrival duration of the energy packets.

\section{System Description}
\label{sec:sm}

\begin{figure}[htpb]
    \centering \includegraphics[scale=0.78]{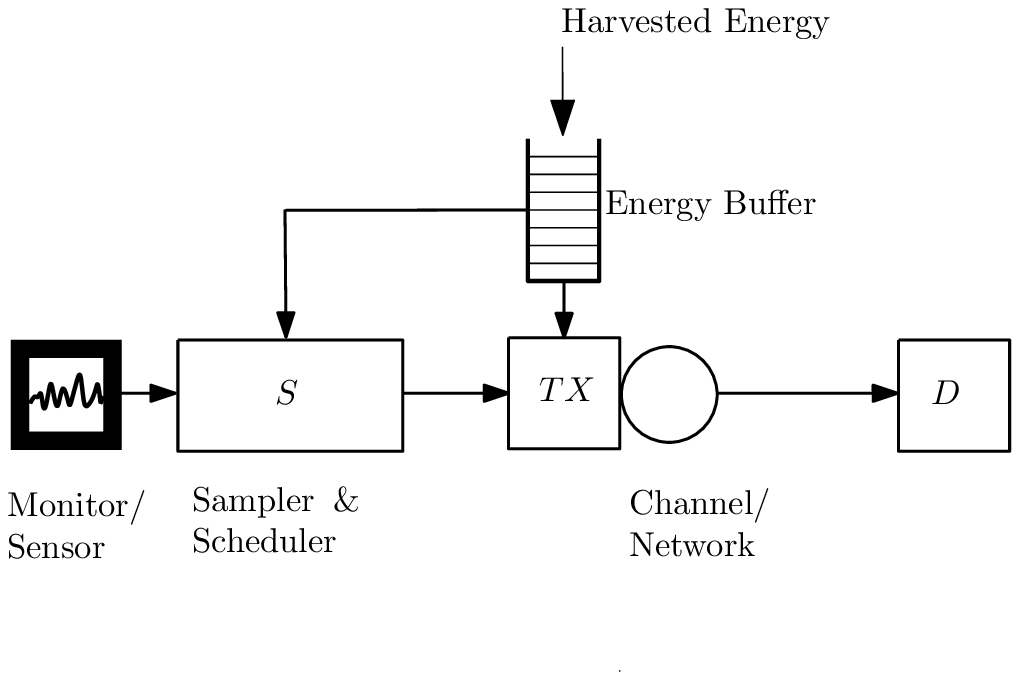}

\caption{ The System Model.}
\label{sensormodelw} 
\end{figure}

Consider an energy harvesting transmitter $TX$ that can send update packets to a destination $D$ through a channel/network but receives no feedback from $D$. The timing of status updates are controlled by a scheduler and sampler $S$
which monitors the energy buffer of $TX$ and the status of a system in real time.  the Age of Information observed at $D$ which is measured as follows:
\[
\Delta (t)= t-U(t)
\]
where $U(t)$ is the time stamp  of the most recent status update packet received by $D$.

Suppose $TX$ has a battery size of $B$ energy units and consumes a unit energy per update transmission where updates are transmitted successfully with probability $p$.
 
Let $N_U(t)$ be the total number of updates transmitted by $TX$ in the time interval$(0,t]$ which is a counting process that can expressed as $N_{U}(t)=\displaystyle\sum_{k} u(t-t_{k})$ where $u(t)$ is the unit step function and $t_{k}$ is the $k^{th}$ update instant.

$S$ is unaware of the age observed at $D$ yet it can compute the expected age $\hat{\Delta}(t)$ at time $t$ based on its previous updates $\lbrace N_U(w)\rbrace_{w \leq t}$, i.e. $\hat{\Delta}(t) =\mathbb {E}[\Delta(t) \mid  \lbrace N_U(w)\rbrace_{w \leq t}]$ assuming $\Delta(0)=0$. The change in $\hat{\Delta}(t) $ can be tracked by the differential equation in below:

\begin{equation}
\dfrac{d}{d t} \hat{\Delta}(t) = 1 - p \hat{\Delta}(t) \dfrac{d}{d t} N_U(t)
\end{equation}

$TX$ harvests energy in order to send its update packets where energy harvests come in unit energy\footnote{While energy packets can refer to actual energy that becomes available to the sender (through energy harvesting, for example), they can also represent transmission opportunities. For example, the energy packets could be replaced with ``tokens" that allow the transmission of update packets in a particular system where update transmission are regulated by ``tokens" either intentionally or due to some physical constraints.}  chunks. Let $N_H(t)$ be the total number of unit energy harvests received by $TX$ in the time interval $(0,t]$ ($N_H(0)=0$) which is a counting process similar to $N_{U}(t)$. Let $E(t)$ be the battery state of $S$ at time $t$, i.e. the number of unit energy harvests stored in $S$. The change in $E(t)$ is governed by the following differential equation:

\begin{equation}
\label{energyupdate}
 \dfrac{d}{d t} E(t) = \mathbbm{1}_{\lbrace E(t)\neq B \rbrace}\dfrac{d}{d t} N_H(t)-\dfrac{d}{d t} N_U(t)
 \end{equation}


\vspace{0.2in}
\section{Minimizing Average Age}
In the problem of minimizing average age, the update policy of $S$ will be taken as a threshold-type policy as suggested by Theorem 3 in  \cite{7308962}. 
Accordingly, for a given threshold function $\displaystyle \tau(\cdot): \lbrace 0, 1,2,.....,B\rbrace\rightarrow [0,+\infty)$ such that  $\tau(0)=+\infty$, the update policy satisfies $N_U(0)=0$ and:
\begin{equation}
\label{onlinepolicy}
\dfrac{d}{d t} N_U(t) =\mathbbm{1}_{ \lbrace \hat{\Delta}(t) \geq \tau(E(t))  \rbrace }\delta(t)
\end{equation}
where $\delta(t)$ is the Dirac delta function.

Note that the vector of thresholds given by $[\tau_{1},\tau_{2},......, \tau_{B}]$ for which $\tau_{m}=\tau(m)$ also identifies an update policy.

We consider the minimization of the average age $\bar{\Delta}$:
\begin{equation}
\bar{\Delta}=\displaystyle\lim\sup_{T \rightarrow \infty}\frac{\mathbb {E}\left[ \displaystyle\int_{0}^{T}\Delta(t)  dt \right] }{T}
\end{equation}
for update policies with threshold constraints as defined in (\ref{onlinepolicy}).

As $\hat{\Delta}(t) =\mathbb {E}[\Delta(t) \mid  \lbrace N_U(w)\rbrace_{w \leq t}]$, the average age can be also expressed as follows:

\begin{equation}
\label{aveage}
\bar{\Delta}=\displaystyle\lim\sup_{T \rightarrow \infty}\frac{\mathbb {E}\left[ \displaystyle\int_{0}^{T}\hat{\Delta}(t)  dt \right] }{T}
\end{equation}

\begin{definition}
An optimal threshold policy is defined as a threshold policy that obeys (\ref{onlinepolicy}) and minimizes (\ref{aveage}).
\end{definition}

\begin{remark}
In any optimal threshold policy, $\tau(0)=+\infty$ and $\tau(m)\rightarrow 0$ as $m=+\infty$.
\end{remark}
\begin{proof}
Clearly,  $\tau(0)=+\infty$ as $S$ cannot send an update without energy. At the other extreme of infinite stored energy, sending an update does not affect the energy state while only lowering age, hence it is optimal to send an update at any positive value of age.
\end{proof}

\begin{definition}
A zero-wait policy is defined as a policy that obeys (\ref{onlinepolicy}) with $\tau(m)=0$ for all $1\leq m \leq B$ and $\tau(0)=+\infty$.
\end{definition}

The average age achieved by any update policy is characterized by the inter-update durations and the expected cumulative age between updates. Let $X_{k}=t_{k+1}-t_{k}$ be the inter-update duration and $c_{k}$ be the expected cumulative age , i.e. $ c_{k}= \mathbb {E}\left[ \displaystyle\int_{t_{k}}^{t_{k+1}}\hat{\Delta}(t)  dt \right]
$, between the updates at $t_{k}$ and $t_{k+1}$. Define $\bar{X}_{N}$  as $\bar{X}_{N}=\frac{1}{N}\displaystyle\sum_{k=0}^{N} X_{k}$ and $\bar{C}_{N}$ as $\bar{C}_{N}=\frac{1}{N}\displaystyle\sum_{k=0}^{N} c_{k}$, then according to the below lemma, the average age can be computed as the ratio of the limits $\bar{X}_{N}$ and $\bar{C}_{N}$.

\begin{lemma}
\label{averageagelim}
For all policies and energy arrival processes for which the limits in (\ref{averageX}) and (\ref{averageC})  exist, the average age can be computed as $\bar{\Delta}=\bar{C}/\bar{X}$.

\begin{equation}
\label{averageX}
\displaystyle\lim_{N\rightarrow \infty} \bar{X}_{N} =\bar{X}, a.s.
\end{equation}
where $\displaystyle\lim_{N\rightarrow \infty} \bar{X}_{N}$ is evaluated for all sample realizations except a set of probability zero.
\begin{equation}
\label{averageC}
\displaystyle\lim_{N\rightarrow \infty}\bar{C}_{N} = \bar{C}
\end{equation}

\end{lemma}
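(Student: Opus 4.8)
The plan is to read $\bar{\Delta}$ as a long-run reward rate over renewal-type cycles delimited by the update instants $t_k$, so that it factors into the ratio of the mean per-cycle cumulative age $c_k$ to the mean cycle length $X_k$. First I would fix a horizon $T$ and let $N_U(T)$ count the updates in $(0,T]$, so that $t_{N_U(T)}\le T<t_{N_U(T)+1}$. Since $\hat{\Delta}(t)\ge 0$, grows at unit rate between updates, and only drops at update instants, the age integral decomposes across cycles and is squeezed between a whole number of cycles:
\[
\sum_{k=0}^{N_U(T)-1}\int_{t_k}^{t_{k+1}}\hat{\Delta}(t)\,dt\;\le\;\int_0^{T}\hat{\Delta}(t)\,dt\;\le\;\int_0^{t_0}\hat{\Delta}(t)\,dt+\sum_{k=0}^{N_U(T)}\int_{t_k}^{t_{k+1}}\hat{\Delta}(t)\,dt.
\]
The leading edge term $\int_0^{t_0}\hat{\Delta}\,dt$ and the single extra cycle are both $O(1)$ relative to the growing sum and will vanish after normalizing by $T$.

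Next I would pass to expectations and invoke the counting relation. Because a partial sum $\sum_{k=0}^{n-1}\int_{t_k}^{t_{k+1}}\hat{\Delta}\,dt$ over a \emph{deterministic} number of cycles $n$ has expectation exactly $\sum_{k=0}^{n-1}c_k$, the per-cycle contributions average to $\bar{C}_N$ by hypothesis (\ref{averageC}). Simultaneously, the bracketing $t_{N_U(T)}\le T<t_{N_U(T)+1}$ combined with the almost-sure convergence $\bar{X}_N\to\bar{X}$ in (\ref{averageX}) yields the elementary renewal limit $N_U(T)/T\to 1/\bar{X}$ almost surely, i.e.\ $T/N_U(T)\to\bar{X}$. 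Writing the normalized integral as $\frac{N_U(T)}{T}\cdot\frac{1}{N_U(T)}\sum_k\int_{t_k}^{t_{k+1}}\hat{\Delta}\,dt$ then exhibits it as a product of a factor tending to $1/\bar{X}$ and a Ces\`aro average of per-cycle ages tending to $\bar{C}$, so the normalized integral is sandwiched between quantities converging to $\bar{C}/\bar{X}$.

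Finally I would convert this into the statement about $\bar{\Delta}=\limsup_T \mathbb{E}\!\left[\int_0^T\hat{\Delta}\,dt\right]/T$ in (\ref{aveage}). Taking expectations of the sandwich and using that $\hat{\Delta}$ is nonnegative and at most linear in the elapsed time within a cycle bounds the edge and extra-cycle contributions, so that both outer terms share the common limit $\bar{C}/\bar{X}$ and the $\limsup$ is in fact an honest limit equal to $\bar{C}/\bar{X}$.

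I expect the main obstacle to be the interaction between the \emph{random} number of cycles $N_U(T)$ and the \emph{expectation} defining $c_k$: one cannot naively replace $\mathbb{E}\!\left[\sum_{k=0}^{N_U(T)}\int_{t_k}^{t_{k+1}}\hat{\Delta}\,dt\right]$ by $\sum_{k=0}^{N_U(T)}c_k$, since $N_U(T)$ is a data-dependent index, a Wald-type issue. The clean resolution is to keep the cycle count deterministic inside every expectation and to transfer to continuous $T$ only through the monotone sandwich above, controlling the partial last cycle via uniform integrability of the $X_k$ (so that $\frac{1}{N}\sum_{k=0}^{N}\mathbb{E}[X_k]\to\bar{X}$ as well) and the near-linear growth of $\hat{\Delta}$ within a cycle. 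Under a genuinely regenerative structure this last step is exactly the renewal-reward theorem; the hypotheses (\ref{averageX})--(\ref{averageC}) are precisely what make the ratio argument go through without assuming that extra structure.
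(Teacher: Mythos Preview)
Your approach is essentially the same as the paper's: sandwich $\int_0^T\hat{\Delta}$ between the partial sums over $N_U(T)$ and $N_U(T)+1$ cycles, then rewrite each bound as $\frac{N_U(T)}{T}\cdot\bar{C}_{N_U(T)}$ and invoke $\bar{X}_N\to\bar{X}$ together with $\bar{C}_N\to\bar{C}$ to conclude both ends converge to $\bar{C}/\bar{X}$. The paper's argument is a terse version of exactly this, and in fact it commits the very Wald-type conflation you warn about (it writes $\sum_{k=0}^{N_U(T)}c_k$ with expectations $c_k$ inside a random-length sum without further comment), so your added care about keeping the cycle count deterministic inside expectations and passing to $T$ only through the monotone sandwich is a genuine improvement over, not a departure from, the paper's proof.
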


\begin{proof}
Consider the inequalities below:

%
%

\[
\displaystyle\lim_{T\rightarrow \infty}\frac{\displaystyle\sum_{k=0}^{N_U(T)}c_{k} }{T} \leq \bar{\Delta} \leq \displaystyle\lim_{T\rightarrow \infty}\frac{\displaystyle\sum_{k=0}^{N_U(T)+1}c_{k} }{T}
\]

where 

\[
\displaystyle\lim_{T\rightarrow \infty}\frac{\displaystyle\sum_{k=0}^{N_U(T)}c_{k} }{T} =\displaystyle\lim_{N\rightarrow \infty}\frac{\displaystyle\sum_{k=0}^{N}c_{k}/N }{\displaystyle\sum_{k=0}^{N} X_{k}/N}=\frac{\bar{C}}{\bar{X}}
\]
and similarly $
\displaystyle\lim_{T\rightarrow \infty}\frac{\displaystyle\sum_{k=0}^{N_U(T)+1}c_{k} }{T}=\frac{\bar{C}}{\bar{X}}
$. Therefore, $\bar{\Delta}=\bar{C}/\bar{X}$.
\end{proof}

\subsection{Poisson Arrivals}
Under Poisson energy arrivals, we investigate policies in the form\footnote{The policies in the form (\ref{onlinepolicy}) are particularly suitable for memoryless energy arrivals as these  policies are oblivious to the history of energy arrivals.}  (\ref{onlinepolicy}) that minimize $\bar{\Delta}$.
\begin{proposition}
When $N_H(t)$ 
 arrival process with rate $\mu_{H}$, 

(i) The average age achieved by zero-wait policy is $\frac{1}{p\mu_{H}}$ .

(ii) The arrival process at $D$ is Poisson process with rate $p\mu_{H}$ .
\end{proposition}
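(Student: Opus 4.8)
The plan is to first identify the update epochs generated by the zero-wait policy, then deduce (ii) from a thinning argument and obtain (i) either directly from the sawtooth shape of the age or through Lemma~\ref{averageagelim}. First I would show that under zero-wait the update instants $\{t_k\}$ coincide with the energy-arrival instants. Since $\tau(m)=0$ for every $m\geq 1$ and $\hat{\Delta}(t)\geq 0$ always holds, the firing condition $\hat{\Delta}(t)\geq\tau(E(t))$ in (\ref{onlinepolicy}) is satisfied the instant $E(t)$ becomes positive; hence every harvested unit is consumed immediately, the battery never holds positive charge for a positive duration, and each arrival of $N_H$ triggers exactly one transmission. Consequently $\{t_k\}$ is itself Poisson of rate $\mu_H$, the inter-update durations $X_k$ are i.i.d.\ exponential with mean $1/\mu_H$, and $\bar{X}=1/\mu_H$.

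For part (ii), each transmission is received successfully with probability $p$, independently of the others. Independently thinning the rate-$\mu_H$ Poisson process of transmissions, keeping each point with probability $p$, produces a Poisson process of rate $p\mu_H$; this thinned stream is exactly the sequence of updates that reach $D$, which establishes (ii).

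For part (i) I would give two consistent derivations. Directly, the true age $\Delta(t)$ drops to $0$ at each successful reception and rises with unit slope otherwise, so it is a sawtooth whose reset epochs form the rate-$p\mu_H$ process of part (ii); the reset intervals $Y$ are exponential with mean $1/(p\mu_H)$, and the renewal-reward area under the sawtooth gives $\bar{\Delta}=\mathbb{E}[Y^2]/(2\,\mathbb{E}[Y])=1/(p\mu_H)$. Through Lemma~\ref{averageagelim} I would instead track the scheduler's estimate: the dynamics $\tfrac{d}{dt}\hat{\Delta}=1-p\hat{\Delta}\,\tfrac{d}{dt}N_U$ make $\hat{\Delta}$ rise with unit slope between updates and be scaled by $(1-p)$ at each update epoch (with probability $p$ the age collapses to $0$, otherwise it is unchanged), yielding the recursion $a_{k+1}^-=(1-p)a_k^-+X_k$ for the pre-update estimate $a_k^-=\hat{\Delta}(t_k^-)$. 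In steady state this gives $\mathbb{E}[a^-]=1/(p\mu_H)$, and writing $c_k=(1-p)a_k^-X_k+X_k^2/2$ with $a_k^-$ independent of $X_k$ gives $\bar{C}=1/(p\mu_H^2)$, so that $\bar{\Delta}=\bar{C}/\bar{X}=1/(p\mu_H)$.

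I expect the main obstacle to be the rigorous treatment of the jump of $\hat{\Delta}$ at an update epoch, since the term $-p\hat{\Delta}(t)\,\tfrac{d}{dt}N_U(t)$ is a product of a Dirac impulse with the very quantity being reset; the clean multiplicative factor $(1-p)$ should be justified from the conditional-expectation meaning of $\hat{\Delta}$ rather than by formal manipulation of the impulse. The supporting facts, namely the independence of $a_k^-$ from $X_k$ and the existence of a stationary mean for the recursion, follow from the memorylessness of the Poisson arrivals and the i.i.d.\ erasures, which render $X_k$ independent of the entire history up to $t_k$.
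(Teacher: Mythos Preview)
Your proposal is correct. The second derivation via Lemma~\ref{averageagelim} is essentially the paper's own proof: the paper tracks the post-update value $\hat{\Delta}(t_k)$ through the recursion $\hat{\Delta}(t_{k+1})=(1-p)(\hat{\Delta}(t_k)+X_k)$, obtains $\lim_k\mathbb{E}[\hat{\Delta}(t_k)]=(1-p)/(p\mu_H)$, and then computes $c_k=\tfrac{1}{2}\mathbb{E}[X_k^2]+\mathbb{E}[\hat{\Delta}(t_k)]\mathbb{E}[X_k]$, which is exactly your $c_k=(1-p)a_k^{-}X_k+X_k^2/2$ rewritten with the post-update label $(1-p)a_k^{-}=\hat{\Delta}(t_k)$. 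Your first, direct sawtooth derivation using the thinned rate-$p\mu_H$ process is a genuinely more elementary route that the paper does not give: it sidesteps the $\hat{\Delta}$ recursion entirely and reads off $\bar{\Delta}=\mathbb{E}[Y^2]/(2\mathbb{E}[Y])$ from the true age at $D$, at the cost of implicitly using part~(ii) before part~(i). The concern you flag about the impulsive term $-p\hat{\Delta}\,\tfrac{d}{dt}N_U$ is legitimate but does not affect either argument, since both rely only on the conditional-expectation interpretation $\hat{\Delta}(t_k^+)=(1-p)\hat{\Delta}(t_k^-)$, which you and the paper use identically.
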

\begin{proof}
For zero-wait policy, the inter-update durations $X_{k}$s are identical to the inter-arrival durations of the energy arrival process. Accordingly, under the assumption that $N_H(t)$ is a Poisson arrival process, $X_{k}$s are independently and exponentially distributed with rate $\mu_{H}$. Therefore, 
\[
\displaystyle\lim_{N\rightarrow \infty} \bar{X}_{N} = \frac{1}{\mu_{H}}
\]
which means the limit in (\ref{averageX}) of Lemma \ref{averageagelim} exists.

Consider $
c_{k}=\mathbb {E}\left[ \displaystyle\int_{t_{k}}^{t_{k+1}}\hat{\Delta}(t)  dt \right]
$
which can be computed as:
\[
c_{k}=\frac{1}{2}\mathbb {E}\left[X_{k}^{2}\right]+\mathbb {E}\left[\hat{\Delta}(t_{k})X_{k}\right]
\]
As $X_{k}$ is independent from $\hat{\Delta}(t_{k})$ for the zero-wait policy, $\mathbb {E}\left[\hat{\Delta}(t_{k})X_{k}\right]=\frac{1}{\mu_{H}}\mathbb{E}\left[\hat{\Delta}(t_{k})\right]$, hence:
\[
c_{k}=\frac{1}{\mu_{H}^{2}}+\frac{1}{\mu_{H}}\mathbb{E}\left[\hat{\Delta}(t_{k})\right]
\]

As $\hat{\Delta}(t_{k+1})=(1-p)(\hat{\Delta}(t_{k})+X_{k})$:

\[
\displaystyle\lim_{k\rightarrow \infty}\mathbb{E}\left[\hat{\Delta}(t_{k})\right]=(\frac{1-p}{p})\frac{1}{\mu_{H}}
\]
which means the limit in (\ref{averageC}) of Lemma \ref{averageagelim} exists and $\bar{C}=\frac{1}{\mu_{H}^{2}}+(\frac{1-p}{p})\frac{1}{\mu_{H}^{2}}$. Therefore,
\[
\bar{\Delta}=\frac{1}{\mu_{H}}+(\frac{1-p}{p})\frac{1}{\mu_{H}} =\frac{1}{p\mu_{H}}
\]
Note that the arrival process at $D$ is a splitting of the update process with probability $p$, thus it is a Poisson process with rate $p\mu_{H}$. 
\end{proof}
\begin{remark}
Under Poisson energy arrivals, there is always a threshold policy that obeys (\ref{onlinepolicy}) and achieves a finite average age such that $
\displaystyle\min_{\lbrace \tau_{1},\tau_{2},....,\tau_{B}\rbrace \in [0,+\infty)^{B}}\bar{\Delta} \leq \frac{1}{p\mu_{H}}
$.
 
\end{remark}

\begin{proposition}
\label{finitemoments} 
When $N_H(t)$ is a Poisson arrival process with rate $\mu_{H}$, for every policy that obeys (\ref{onlinepolicy}), $\mathbb {E}[X_{k}]$ and $\mathbb {E}[X_{k}^{2}]$ are upper bounded as in the following inequalities:
\[
\mathbb {E}[X_{k}] \leq  \tau_{max}+\frac{1}{\mu_{H}}e^{-\mu_{H}\tau_{max}}
\]
\[
\mathbb {E}[X_{k}^{2}] \leq  \tau_{max}^{2}+(\frac{2}{\mu_{H}^{2}}+\frac{2}{\mu_{H}}\tau_{max})e^{-\mu_{H}\tau_{max}}
\]
where $\tau_{max}=\displaystyle\max \lbrace \tau_{1},\tau_{2},....,\tau_{B}\rbrace$
\end{proposition}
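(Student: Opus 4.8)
The plan is to dominate the inter-update duration pathwise by a single random variable whose moments are easy to compute: I would show that, almost surely,
\[
X_k \le \max\{A_k,\tau_{max}\},
\]
where $A_k$ is the residual time from the update epoch $t_k$ to the next energy arrival, and then integrate. The two claimed inequalities then follow because $a\mapsto\max\{a,\tau_{max}\}$ and its square are nondecreasing, so $\mathbb{E}[X_k]$ and $\mathbb{E}[X_k^2]$ are bounded by the corresponding moments of $\max\{A_k,\tau_{max}\}$.

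First I would isolate the role of $\tau_{max}$. Between consecutive updates $\frac{d}{dt}N_U(t)=0$, so by the age dynamics $\hat{\Delta}$ increases at unit rate and $\hat{\Delta}(t)\ge t-t_k$ on $(t_k,t_{k+1})$. Since $\tau(m)\le\tau_{max}$ for every energy level $m\ge 1$, once $t\ge t_k+\tau_{max}$ we have $\hat{\Delta}(t)\ge\tau_{max}\ge\tau(E(t))$, so the update-triggering inequality in (\ref{onlinepolicy}) is met and the \emph{only} remaining obstruction to updating is the absence of stored energy.

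Next I would split on the battery content just after $t_k$. If $E(t_k^+)\ge 1$, the stored energy cannot decrease before the next update, so energy is available throughout $(t_k,t_{k+1})$ and the update fires no later than the instant $\hat{\Delta}$ reaches $\tau_{max}$; hence $X_k\le\tau_{max}$. If $E(t_k^+)=0$, no update is possible until the first energy arrival at time $t_k+A_k$: if $A_k\ge\tau_{max}$ then $\hat{\Delta}(t_k+A_k)\ge A_k\ge\tau_{max}$ and the update fires at $t_k+A_k$, while if $A_k<\tau_{max}$ energy is present and, by the previous paragraph, the update fires by $t_k+\tau_{max}$. In every case $X_k\le\max\{A_k,\tau_{max}\}$.

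The main obstacle is to pin down the law of $A_k$. Because the policy is causal and $t_k$ is a stopping time for the filtration generated by the energy-arrival process, the strong Markov (memoryless) property of the Poisson process gives that the residual time to the next arrival after $t_k$ is exactly $\mathrm{Exp}(\mu_H)$, independent of the past; in particular $\mathbb{E}[A_k\mid A_k>\tau_{max}]=\tau_{max}+\tfrac{1}{\mu_H}$ and $\mathbb{E}[A_k^2\mid A_k>\tau_{max}]=\tau_{max}^2+\tfrac{2}{\mu_H}\tau_{max}+\tfrac{2}{\mu_H^2}$. Decomposing
\[
\mathbb{E}\big[\max\{A_k,\tau_{max}\}^m\big]=\tau_{max}^m\,\mathbb{P}(A_k\le\tau_{max})+\mathbb{E}\big[A_k^m\mathbbm{1}\{A_k>\tau_{max}\}\big]
\]
for $m=1,2$ and using $\mathbb{P}(A_k>\tau_{max})=e^{-\mu_H\tau_{max}}$ yields the stated bounds after the cancellation of the $\tau_{max}^m e^{-\mu_H\tau_{max}}$ terms.
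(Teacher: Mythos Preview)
Your proof is correct and rests on the same two ingredients the paper uses: the exponential residual time to the next energy arrival, and the observation that once $\tau_{max}$ time has elapsed after $t_k$ the age threshold is met for every nonempty energy level, so the only remaining obstruction is energy. The paper, however, argues through the law of total expectation: it conditions on $(\hat\Delta(t_k),E(t_k))$, splits into the subcases $m=0$ with $\Delta\ge\tau_1$, $m=0$ with $\Delta<\tau_1$, and $m>0$, bounds each conditional moment using the residual time $R$, and then maximizes the resulting bounds over $(\Delta,m)$. You instead compress all of that into a single pathwise domination $X_k\le\max\{A_k,\tau_{max}\}$ and compute the moments of the dominating variable directly. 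Both routes yield the identical closed-form bounds, but your packaging is more economical: it avoids the case-by-case bookkeeping and makes transparent why the final expressions are exactly the first and second moments of $\max\{\mathrm{Exp}(\mu_H),\tau_{max}\}$.
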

\begin{proof}
The expectations of $\mathbb {E}[X_{k}]$ and $\mathbb {E}[X_{k}^{2}]$ can be upper bounded by the law of total expectations as follows:
\[
\mathbb {E}[X_{k}]= \mathbb {E}[\mathbb {E}[X_{k}\mid \hat{\Delta}(t_{k}),E(t_{k})]]
\]
\[
\leq \displaystyle\max_{\Delta,m}\mathbb {E}[X_{k}\mid \hat{\Delta}(t_{k})=\Delta,E(t_{k})=m]
\]
\[
\mathbb {E}[X_{k}^{2}]= \mathbb {E}[\mathbb {E}[X_{k}^{2}\mid \hat{\Delta}(t_{k}),E(t_{k})]]
\]
\[
\leq \displaystyle\max_{\Delta,m}\mathbb {E}[X_{k}^{2}\mid \hat{\Delta}(t_{k})=\Delta,E(t_{k})=m]
\]
Let $R$ be the residual time (exponentially distributed with mean $\frac{1}{\mu_{H}}$ )  to the next energy arrival for any given time. Then, $\mathbb {E}[X_{k}\mid \hat{\Delta}(t_{k})=\Delta,E(t_{k})=m]$ and $\mathbb {E}[X_{k}^{2}\mid \hat{\Delta}(t_{k})=\Delta,E(t_{k})=m]$ can be upper bounded as  in below:

(i) For $m=0$:
\[
\mathbb {E}[X_{k}\mid \hat{\Delta}(t_{k})=\Delta,E(t_{k})=0]\leq 
\]
\[
(\tau_{max}-\Delta)\Pr(R \leq \tau_{1}-\Delta)+\underbrace{\mathbb {E}[R \mid R > \tau_{1}-\Delta ]}_{\mathbb {E}[R]+[\tau_{1}-\Delta]^{+}}\Pr(R > \tau_{1}-\Delta)
\]
\[
\mathbb {E}[X_{k}^{2}\mid \hat{\Delta}(t_{k})=\Delta,E(t_{k})=0]\leq 
\]
\[
(\tau_{max}-\Delta)^{2}\!\Pr(R \leq \tau_{1}-\Delta)+\underbrace{\mathbb {E}[R^{2} \!\mid\! R > \tau_{1}-\Delta ]}_{\mathbb {E}[(R+[\tau_{1}-\Delta]^{+})^{2}]}\Pr(R > \tau_{1}-\Delta)
\]
(i-a) For $\Delta \geq \tau_{1}$, 
\[
\mathbb {E}[X_{k}\mid \hat{\Delta}(t_{k})=\Delta \geq \tau_{m},E(t_{k})=0]=\mathbb {E}[R]=\frac{1}{\mu_{H}}
\]
\[
\mathbb {E}[X_{k}^{2}\mid \hat{\Delta}(t_{k})=\Delta \geq \tau_{m},E(t_{k})=0]=\mathbb {E}[R^{2}]=\frac{2}{\mu_{H}^{2}}
\]
(i-b) For $\Delta < \tau_{1}$,
\[
\mathbb {E}[X_{k}\mid \hat{\Delta}(t_{k})=\Delta <  \tau_{m},E(t_{k})=0]\leq
\]
\[
(\tau_{max}-\Delta)(1-e^{-\mu_{H}(\tau_{1}-\Delta)})+(\tau_{1}-\Delta +\frac{1}{\mu_{H}})e^{-\mu_{H}(\tau_{1}-\Delta)}
\]
\[
\mathbb {E}[X_{k}^{2}\mid \hat{\Delta}(t_{k})=\Delta <  \tau_{m},E(t_{k})=0]\leq
\]
\[
(\tau_{max}-\Delta)^{2}(1-e^{-\mu_{H}(\tau_{1}-\Delta)})
\]
\[
+(\frac{2}{\mu_{H}^{2}}+\frac{2}{\mu_{H}}(\tau_{1}-\Delta) +(\tau_{1}-\Delta)^{2})e^{-\mu_{H}(\tau_{1}-\Delta)}
\]
As both of the upper bounds are nondecreasing with $\tau_{1}-\Delta$, they are maximized when $\tau_{1}-\Delta$ is maximized where $\Delta=0$ and $\tau_{1}= \tau_{max}$. Therefore,
\[
\mathbb {E}[X_{k}\mid \hat{\Delta}(t_{k})=\Delta,E(t_{k})=0]\leq
\]
\[
\tau_{max}(1-e^{-\mu_{H}\tau_{max}})+(\tau_{max}+\frac{1}{\mu_{H}})e^{-\mu_{H}\tau_{max}}
\]
\[
\mathbb {E}[X_{k}^{2}\mid \hat{\Delta}(t_{k})=\Delta,E(t_{k})=0]\leq
\]
\[
\tau_{max}^{2}(1-e^{-\mu_{H}\tau_{max}})+(\frac{2}{\mu_{H}^{2}}+\frac{2}{\mu_{H}}\tau_{max} +\tau_{max}^{2})e^{-\mu_{H}\tau_{max}}
\]
(ii) For $m>0$:
\[
\mathbb {E}[X_{k}\mid \hat{\Delta}(t_{k})=\Delta,E(t_{k})=m]\leq \tau_{max}-\Delta
\]
\[
\leq \tau_{max}(1-e^{-\mu_{H}\tau_{max}})+(\tau_{max}+\frac{1}{\mu_{H}})e^{-\mu_{H}\tau_{max}}
\]
\[
\mathbb {E}[X_{k}^{2}\mid \hat{\Delta}(t_{k})=\Delta,E(t_{k})=m]\leq (\tau_{max}-\Delta)^{2}
\]
\[
\leq \tau_{max}^{2}(1-e^{-\mu_{H}\tau_{max}})+(\frac{2}{\mu_{H}^{2}}+\frac{2}{\mu_{H}}\tau_{max} +\tau_{max}^{2})e^{-\mu_{H}\tau_{max}}
\]

\end{proof}

\begin{definition}
Define $\bar{\Delta}_{B}^{*}(\mu_{H},p)$ as the the minimum average age achievable by any policy that obeys (\ref{onlinepolicy}) for an $S$ that transmits update packets with success probability $p$ and has an energy buffer of $B$ units which receives Poisson energy arrivals with rate $\mu_{H}$. 
\end{definition}

\begin{lemma}
\label{poissonbound} 
Under Poisson energy arrivals, the minimum average age is lower bounded by the minimum average age achieved over the lossless channel as in the following:

\begin{equation}
\bar{\Delta}_{B}^{*}(p\mu_{H},1)\leq \bar{\Delta}_{B}^{*}(\mu_{H},p)
\end{equation}  

\end{lemma}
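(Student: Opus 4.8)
The plan is to prove the bound by a sample-path coupling that turns the lossy channel of energy rate $\mu_H$ into a lossless channel of the thinned energy rate $p\mu_H$. The starting observation is that the average age depends on a policy only through the process of \emph{successfully received} updates at $D$: between two consecutive receptions the age rises linearly with slope $1$ and resets at each reception, so any two systems that generate the same reception instants generate identical age sample paths and hence the same $\bar{\Delta}=\bar{C}/\bar{X}$. Thus it suffices to exhibit, on the lossless channel at rate $p\mu_H$, a policy that reproduces the reception process of the optimal lossy policy.

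The coupling is built as follows. First I would attach to every energy arrival of the rate-$\mu_H$ process an independent mark that is ``effective'' with probability $p$; by Poisson thinning the effective arrivals form a Poisson process of rate $p\mu_H$, independent of the policy. I would then \emph{define} the channel outcome of each transmission to be the mark of the single energy unit it consumes (using FIFO bookkeeping of units). Since each transmission consumes one unit whose mark is Bernoulli $p$ independently, this reproduces exactly the erasure channel, while making ``successful transmission'' synonymous with ``an effective unit was spent.'' Consequently each lossy reception is charged to a distinct effective arrival occurring no later than the reception, so the cumulative reception count never exceeds the cumulative effective-arrival count.

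Next I would feed the candidate lossless system with the effective arrivals and the same battery size $B$, and let it transmit exactly at the lossy reception instants $\{t_i\}$. Feasibility under the finite battery is the crux, and I would establish it by an event-by-event induction showing that the lossless battery level dominates the number of effective units currently held in the lossy battery, $E'(t)\ge E_e(t)$. This is preserved at successful transmissions (both drop by one), is unaffected by failed transmissions and non-effective arrivals (which leave $E_e$ unchanged), and is preserved at effective arrivals because whenever the lossless buffer is saturated at $B$ the lossy count already satisfies $E_e\le B=E'$. Since just before $t_i$ the lossy battery holds the effective unit about to be spent, $E_e(t_i^-)\ge 1$, domination yields $E'(t_i^-)\ge 1$, so the lossless system fires at every $t_i$ and reproduces the reception process on the nose, achieving age equal to $\bar{\Delta}_B^*(\mu_H,p)$.

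Finally I would pass from achievability to the stated minimum. The reproducing policy lives on the enlarged probability space (the effective arrivals are its real energy, while the non-effective arrivals can be regenerated internally as an independent rate-$(1-p)\mu_H$ Poisson stream), so the overall lossless optimum at rate $p\mu_H$ is at most $\bar{\Delta}_B^*(\mu_H,p)$; invoking the optimality of threshold policies on the lossless channel (Theorem 3 of \cite{7308962}) identifies this overall optimum with $\bar{\Delta}_B^*(p\mu_H,1)$, giving the claim. Throughout, the finite first and second moments from Proposition \ref{finitemoments} guarantee that the limits of Lemma \ref{averageagelim} exist, so that $\bar{\Delta}=\bar{C}/\bar{X}$ is well defined on both sides. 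The main obstacle I anticipate is exactly the finite-battery feasibility step: without the buffer-domination argument it is not a priori clear that discarding the non-effective arrivals, together with the attendant overflow behaviour, still leaves enough energy to fire at every $t_i$.
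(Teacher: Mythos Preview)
Your argument is correct and follows the same idea as the paper's proof: both mark each energy packet as ``useful'' (success) or ``wasted'' (erasure) independently with probability $p$ and $1-p$, and then compare to the thinned, lossless system at rate $p\mu_H$. The paper asserts the key comparison (that discarding the wasted packets can only reduce age) with a one-line ``clearly,'' whereas you supply the explicit sample-path coupling and the buffer-domination invariant $E'(t)\ge E_e(t)$ that makes that step rigorous; otherwise the two proofs coincide.
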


\begin{proof}

Note that the system with $p<1$ corresponds to each energy packet that is used to send an update, being wasted due to an erasure with probability $p$, independently of the battery state.  Let there be two types of energy packets: type-1 energy packets that, when consumed, cause the packet to be lost, and type-2 energy packets that, when consumed result in successful transmission. The success process is independent of the battery state, and each packet is independently type-1 or type-2 with probabilities $p$ and $1-p$. Suppose energy packets are stored in two different internal buffers of the sender corresponding their type, i.e. type-1 and type-2 buffers. When a packet is lost, an energy packet is used from the type-1 buffer, so on. The capacity of type-2/type-1 buffers are also dynamically determined by the number of type-1/type-2 packets as the total capacity of these buffers is $B$. If $S$ cannot differentiate the type of recieved energy packets, then this interpretation does not change the system operation. Now, suppose $S$ can differentiate  type-1 and type-2 energy packets, so it simply rejects type-1 energy packets while storing others. Clearly, the age can be reduced in this case as $S$ no longer wastes its buffer capacity and serving time on type-1 buffer which does not produce succesfullly transmitted packets. Note that by independent splitting, this is equivalent to having Poisson energy arrivals with rate $p\mu_{H}$ and success probability $1$.    
\end{proof}

According to the above result, for a particular buffer limit $B$, the minimum average age found assuming $p=1$  gives a lower bound on the minimum average for  $p<1$ case. The case for $p=1$ is more tractable as the age is always reset to zero after an update. In this case, inter-update durations are generated by a Markov renewal process depending only on energy states rather than the age itself.  

Next, we consider the scenario where $B=1$ and derive the expression for the average age in the following theorem. 

\begin{theorem}
\label{p1age}
When $p=1$ and $B=1$, the average age $\bar{\Delta}$ can be expressed as follows:

\begin{equation}
\bar{\Delta}=\frac{1}{2}\frac{\tau_{1}^{2}+(\frac{2}{\mu_{H}^{2}}+\frac{2}{\mu_{H}}\tau_{1})e^{-\mu_{H}\tau_{1}}}{\tau_{1}+\frac{1}{\mu_{H}}e^{-\mu_{H}\tau_{1}}}
\end{equation}

and $\bar{\Delta}$ is minimized to $\frac{2W(\frac{1}{\sqrt{2}})}{\mu_{H}}$, i.e. $\bar{\Delta}_{1}^{*}(\mu_{H},1)=\frac{2W(\frac{1}{\sqrt{2}})}{\mu_{H}}$ when $\tau_{1}=\frac{2W(\frac{1}{\sqrt{2}})}{\mu_{H}}$ where $W(\cdot)$ is the Lambert-W function.

\end{theorem}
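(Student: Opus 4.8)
The plan is to exploit the renewal structure forced by the two assumptions. When $p=1$ every update succeeds and resets the age to zero, and when $B=1$ every update empties the battery, so immediately after each update instant $t_{k}$ the system is in the identical state $\hat{\Delta}(t_{k}^{+})=0$, $E(t_{k}^{+})=0$. By the memorylessness of the Poisson arrivals the inter-update durations $\{X_{k}\}$ are therefore i.i.d. First I would describe one cycle explicitly. Starting from an empty battery the sensor must wait for the next energy arrival, whose residual time $R$ is exponential with rate $\mu_{H}$; during this wait the age grows as $\hat{\Delta}(t)=t-t_{k}$. If $R\geq\tau_{1}$ the age already exceeds the threshold when the energy arrives, so the update fires at once and $X_{k}=R$; if $R<\tau_{1}$ the battery is full but the age has not yet reached $\tau_{1}$, and since $B=1$ no further energy can be stored, so the sensor waits deterministically until the age hits $\tau_{1}$, giving $X_{k}=\tau_{1}$. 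Thus $X_{k}=\max(R,\tau_{1})$.

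With this representation the moments follow from elementary exponential integrals. Splitting on the event $\{R\geq\tau_{1}\}$ and using $\mathbb{E}[R\,\mathbbm{1}_{R\geq\tau_{1}}]=(\tau_{1}+\frac{1}{\mu_{H}})e^{-\mu_{H}\tau_{1}}$ together with $\mathbb{E}[R^{2}\,\mathbbm{1}_{R\geq\tau_{1}}]=(\tau_{1}^{2}+\frac{2}{\mu_{H}}\tau_{1}+\frac{2}{\mu_{H}^{2}})e^{-\mu_{H}\tau_{1}}$, I obtain $\bar{X}=\mathbb{E}[X_{k}]=\tau_{1}+\frac{1}{\mu_{H}}e^{-\mu_{H}\tau_{1}}$ and $\mathbb{E}[X_{k}^{2}]=\tau_{1}^{2}+(\frac{2}{\mu_{H}^{2}}+\frac{2}{\mu_{H}}\tau_{1})e^{-\mu_{H}\tau_{1}}$; these are exactly the bounds of Proposition~\ref{finitemoments} attained at $\tau_{max}=\tau_{1}$, so both moments are finite and the i.i.d.\ property guarantees via the strong law that the limits required by Lemma~\ref{averageagelim} exist (indeed $c_{k}$ is constant across cycles). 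Because the age rises linearly from zero over each cycle, $c_{k}=\mathbb{E}[\int_{0}^{X_{k}}s\,ds]=\frac{1}{2}\mathbb{E}[X_{k}^{2}]$, whence $\bar{C}=\frac{1}{2}\mathbb{E}[X_{k}^{2}]$. Lemma~\ref{averageagelim} then yields $\bar{\Delta}=\bar{C}/\bar{X}$, which is precisely the stated fraction.

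For the optimization I would differentiate $\bar{\Delta}=f(\tau_{1})/(2g(\tau_{1}))$ with $f=\mathbb{E}[X_{k}^{2}]$ and $g=\bar{X}$. The pleasant surprise is how much cancels: a direct computation gives $f'(\tau_{1})=2\tau_{1}(1-e^{-\mu_{H}\tau_{1}})$ and $g'(\tau_{1})=1-e^{-\mu_{H}\tau_{1}}$, so the stationarity condition $f'g=fg'$ reduces, after dividing by the common positive factor $1-e^{-\mu_{H}\tau_{1}}$, to the single relation $2\tau_{1}g=f$. Expanding this, the terms $\frac{2\tau_{1}}{\mu_{H}}e^{-\mu_{H}\tau_{1}}$ cancel and one is left with the clean transcendental equation $\mu_{H}^{2}\tau_{1}^{2}=2e^{-\mu_{H}\tau_{1}}$. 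Writing $x=\mu_{H}\tau_{1}$ and taking the positive square root gives $x e^{x/2}=\sqrt{2}$, i.e.\ $(x/2)e^{x/2}=\frac{1}{\sqrt{2}}$, whose solution by definition of the Lambert-$W$ function is $x/2=W(\frac{1}{\sqrt{2}})$, hence $\tau_{1}=\frac{2W(1/\sqrt{2})}{\mu_{H}}$.

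Finally I would read the optimal value straight off the stationarity identity: at the optimum $f=2\tau_{1}g$, so $\bar{\Delta}=f/(2g)=\tau_{1}$, giving $\bar{\Delta}_{1}^{*}(\mu_{H},1)=\frac{2W(1/\sqrt{2})}{\mu_{H}}$. The main obstacle is not any single computation but pinning down the cycle description $X_{k}=\max(R,\tau_{1})$ exactly---in particular recognizing that with $B=1$ any energy arriving during the deterministic wait is wasted, so no randomness enters once the battery is full---and then trusting the algebra through the large cancellations that collapse the derivative condition to $\mu_{H}^{2}\tau_{1}^{2}=2e^{-\mu_{H}\tau_{1}}$. A secondary check is confirming this critical point is a genuine minimum, e.g.\ by examining the sign of $\bar{\Delta}-\tau_{1}$ or the boundary behavior as $\tau_{1}\to 0$ and $\tau_{1}\to\infty$.
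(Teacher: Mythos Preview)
Your proof is correct and, for the derivation of the formula, essentially identical to the paper's: both exploit that $p=1$, $B=1$ force an i.i.d.\ renewal structure, compute $\mathbb{E}[X_{k}]$ and $\mathbb{E}[X_{k}^{2}]$ by conditioning on the first energy arrival, and invoke Lemma~\ref{averageagelim}. Your explicit identification $X_{k}=\max(R,\tau_{1})$ is a clean touch that the paper leaves implicit in its law-of-total-expectation split.

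The one genuine methodological difference is in the optimization step. You differentiate the ratio $f/(2g)$ directly via the quotient rule, cancel the common factor $1-e^{-\mu_{H}\tau_{1}}$, and read off both the stationarity condition $2\tau_{1}g=f$ and the consequence $\bar{\Delta}=\tau_{1}$ at the optimum. The paper instead uses a Dinkelbach-type fractional-programming device: it introduces $J(\tau_{1})=f(\tau_{1})-2\bar{\Delta}_{\min}\,g(\tau_{1})$, observes that minimizing $\bar{\Delta}$ is equivalent to minimizing $J$ (with $\min J=0$), and from $J'(\tau_{1})=2(\tau_{1}-\bar{\Delta}_{\min})(1-e^{-\mu_{H}\tau_{1}})=0$ concludes that the optimal threshold is the fixed point $\tau_{1}=\bar{\Delta}_{\min}$. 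Both routes collapse to the same transcendental equation $\mu_{H}^{2}\tau_{1}^{2}=2e^{-\mu_{H}\tau_{1}}$ and the same Lambert-$W$ solution. Your approach is the more elementary calculus; the paper's makes the fixed-point structure $\bar{\Delta}_{\min}=\bar{\Delta}(\bar{\Delta}_{\min})$ explicit from the outset, which is conceptually appealing and generalizes naturally to other ratio-minimization problems. Neither proof fully verifies that the critical point is a minimum rather than a saddle, so your closing caveat about checking boundary behavior is a fair addition.
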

\begin{proof}
As $p=1$, the expected age just after an update instant is always reset to zero and the policies for $B=1$ can be described by only one threshold which is $\tau_{1}$. In particular, the inter-update durations $X_{k}$s are i.i.d. which means $\mathbb {E}[X_{0}]=\mathbb {E}[X_{1}]=....=\mathbb {E}[X_{k}]$ and $\mathbb {E}[X_{0}^{2}]=\mathbb {E}[X_{1}^{2}]=....=\mathbb {E}[X_{k}^{2}]$. The limits in  (\ref{averageX}) and (\ref{averageC}) of Lemma \ref{averageagelim} exist as $X_{k}$s are i.i.d. and $c_{k}=\frac{1}{2}\mathbb {E}[X_{k}^{2}]$ is finite.

Accordingly, the average age $\bar{\Delta}$ is equal to $\frac{\mathbb {E}[X_{k}^{2}]}{2\mathbb {E}[X_{k}]}$ for any $k$. The values of $\mathbb {E}[X_{k}^{2}]$ and $\mathbb {E}[X_{k}]$ for a given $\tau_{1}$ can be computed by the law of total expectations as follows:
\[
\mathbb {E}[X_{k}^{2}]=\tau_{1}^{2}(1-e^{-\mu_{H}\tau_{1}})+(\frac{2}{\mu_{H}^{2}}+\frac{2}{\mu_{H}}\tau_{1} +\tau_{1}^{2})e^{-\mu_{H}\tau_{1}}
\]
\[
\mathbb {E}[X_{k}]=\tau_{1}(1-e^{-\mu_{H}\tau_{1}})+(\tau_{1}+\frac{1}{\mu_{H}})e^{-\mu_{H}\tau_{1}}
\]
Therefore,
\[
\bar{\Delta}=\frac{1}{2}\frac{\tau_{1}^{2}(1-e^{-\mu_{H}\tau_{1}})+(\frac{2}{\mu_{H}^{2}}+\frac{2}{\mu_{H}}\tau_{1} +\tau_{1}^{2})e^{-\mu_{H}\tau_{1}}}{\tau_{1}(1-e^{-\mu_{H}\tau_{1}})+(\tau_{1}+\frac{1}{\mu_{H}})e^{-\mu_{H}\tau_{1}}}
\]
Consider the ratio $\bar{\Delta}/\bar{\Delta}_{min}$ where $\bar{\Delta}_{min}=\displaystyle\min_{\tau_{1}\geq 0}\bar{\Delta}(\tau_{1})$.
As $\frac{\bar{\Delta}}{\bar{\Delta}_{min}} \geq 1$, $
\displaystyle\arg\min_{\tau_{1}\geq 0}\bar{\Delta}(\tau_{1})= \displaystyle\arg\min_{\tau_{1}\geq 0} J(\tau_{1})
$ 
where $\displaystyle\arg\min_{\tau_{1}\geq 0} J(\tau_{1})=0$ and
\[
J(\tau_{1})= \tau_{1}^{2}+(\frac{2}{\mu_{H}^{2}}+\frac{2}{\mu_{H}}\tau_{1} )e^{-\mu_{H}\tau_{1}}-\bar{\Delta}_{min}\left[ 2\tau_{1}+\frac{2}{\mu_{H}}e^{-\mu_{H}\tau_{1}}\right] 
\]
When $\dfrac{d}{d \tau_{1}}J(\tau_{1})=0$, $(\tau_{1}-\bar{\Delta}_{min})(1-e^{-\mu_{H}\tau_{1}})=0$
which means  $
\displaystyle\arg\min_{\tau_{1}\geq 0} J(\tau_{1})=\bar{\Delta}_{min}
$
since $(1-e^{-\mu_{H}\tau_{1}})>0$. Hence, $\bar{\Delta}_{min}$ is the fixed point of $\bar{\Delta}(\tau_{1})$ which satisfies $\bar{\Delta}_{min}^{2}=\frac{2}{\mu_{H}}e^{-\mu_{H}\bar{\Delta}_{min}}$. Therefore $
\bar{\Delta}_{min}=\frac{2W(\frac{1}{\sqrt{2}})}{\mu_{H}}
$.
\end{proof}

\begin{corollary}
For $B=1$ and $p=1$, as the optimal value of the threshold $\tau_{1}$ is $\frac{2W(\frac{1}{\sqrt{2}})}{\mu_{H}}$, the mean inter-update duration $\mathbb {E}[X_{k}]=\tau_{1}+\frac{1}{\mu_{H}}e^{-\mu_{H}\tau_{1}}$ is $2W(\frac{1}{\sqrt{2}})+e^{-2W(\frac{1}{\sqrt{2}})}$ times larger than $\frac{1}{\mu_{H}}$. As $W(\cdot)$ is the Lambert-W function, $2W(\frac{1}{\sqrt{2}})+e^{-2W(\frac{1}{\sqrt{2}})}$ is approximately $1.307$.
\end{corollary}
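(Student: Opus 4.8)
The plan is to obtain the result by a direct substitution of the age-optimal threshold found in Theorem~\ref{p1age} into the mean inter-update duration, followed by an elementary algebraic simplification and a numerical evaluation of the Lambert-$W$ value. First I would recall from the computation in the proof of Theorem~\ref{p1age} that, for an arbitrary threshold $\tau_{1}$, the mean inter-update duration is
\[
\mathbb{E}[X_{k}]=\tau_{1}(1-e^{-\mu_{H}\tau_{1}})+\left(\tau_{1}+\tfrac{1}{\mu_{H}}\right)e^{-\mu_{H}\tau_{1}}.
\]
Expanding the products and cancelling the two $\tau_{1}e^{-\mu_{H}\tau_{1}}$ terms of opposite sign collapses this immediately to the compact form $\mathbb{E}[X_{k}]=\tau_{1}+\tfrac{1}{\mu_{H}}e^{-\mu_{H}\tau_{1}}$ quoted in the statement.

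Next I would insert the optimal threshold $\tau_{1}=\tfrac{2W(1/\sqrt{2})}{\mu_{H}}$ established in Theorem~\ref{p1age}. The key observation is that the exponent simplifies cleanly, since $\mu_{H}\tau_{1}=2W(1/\sqrt{2})$, so that
\[
\mathbb{E}[X_{k}]=\frac{1}{\mu_{H}}\left(2W\!\left(\tfrac{1}{\sqrt{2}}\right)+e^{-2W(1/\sqrt{2})}\right).
\]
Factoring out $\tfrac{1}{\mu_{H}}$, which under the Poisson$(\mu_{H})$ harvest process is precisely the mean inter-arrival time of the energy packets, shows that $\mathbb{E}[X_{k}]$ is the stated dimensionless multiple $2W(1/\sqrt{2})+e^{-2W(1/\sqrt{2})}$ of $\tfrac{1}{\mu_{H}}$.

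It remains to verify the numerical value. Using the defining relation $W(x)e^{W(x)}=x$, a short root-finding computation for $w e^{w}=1/\sqrt{2}$ yields $W(1/\sqrt{2})\approx 0.4506$, whence $2W(1/\sqrt{2})\approx 0.9012$ and $e^{-2W(1/\sqrt{2})}\approx 0.4061$, so that their sum is $\approx 1.307$ as claimed. I do not expect any genuine obstacle here: the statement is a direct corollary of the optimal threshold derived in Theorem~\ref{p1age}, and the only mildly non-routine step is the numerical evaluation of the Lambert-$W$ value, which is standard.
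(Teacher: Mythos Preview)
Your proposal is correct and is exactly the direct-substitution argument the paper intends: the corollary is stated without a separate proof because it follows immediately from plugging the optimal $\tau_{1}=\tfrac{2W(1/\sqrt{2})}{\mu_{H}}$ of Theorem~\ref{p1age} into the expression $\mathbb{E}[X_{k}]=\tau_{1}+\tfrac{1}{\mu_{H}}e^{-\mu_{H}\tau_{1}}$ derived there. Your simplification and numerical check are the natural way to spell this out.
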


\begin{figure}[htpb]
\centering
  \begin{psfrags}
    \psfrag{X}[t]{$\tau_{1}$}
    \psfrag{Y}[b]{Average Age  $\bar{\Delta}$}
    \psfrag{A}[l]{$\mu_{H}=10$}
    \psfrag{B}[l]{$\mu_{H}=15$}
    \psfrag{C}[l]{$\mu_{H}=20$}
    \psfrag{D}[l]{$\mu_{H}=25$}
    \psfrag{E}[l]{$\mu_{H}=30$}
    \includegraphics[scale=0.24]{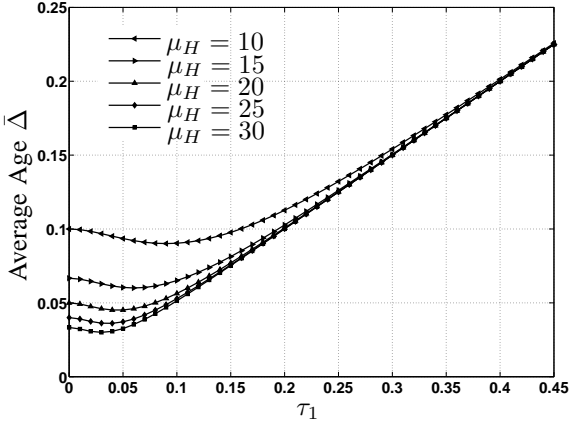}
    \end{psfrags}
\caption{Average age $\bar{\Delta}$ versus $\tau_{1}$ when $B=1$ and $p=1$.}
\label{fig:maxthroughputcomppowerhalv}
\end{figure}


A consequence of the above theorem is that $\bar{\Delta}_{1}^{*}(\mu_{H},p)\geq \frac{2W(\frac{1}{\sqrt{2}})}{p\mu_{H}}$ which means, with a unit buffer, a threshold policy can achieve up to $10\%$ ($2W(\frac{1}{\sqrt{2}})\approx 0.901$) lower average age than that of the zero-wait policy.

In the below theorem, we show a more general lower bound on $\bar{\Delta}_{B}^{*}(\mu_{H},p)$ for any buffer limit $B\geq 1$.

\begin{theorem}
\label{lowestage}
When the energy arrival process is Poisson, $\bar{\Delta}_{B}^{*}(\mu_{H},p)\geq \frac{1}{2p\mu_{H}}$ for any $B$ and $p$.
\end{theorem}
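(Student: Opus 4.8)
The plan is to reduce the general claim to the lossless case and then combine a second-moment (convexity) inequality with the energy-budget constraint. By Lemma~\ref{poissonbound} we have $\bar{\Delta}_{B}^{*}(\mu_{H},p)\geq \bar{\Delta}_{B}^{*}(p\mu_{H},1)$, so writing $\lambda = p\mu_{H}$ it suffices to prove $\bar{\Delta}_{B}^{*}(\lambda,1)\geq \frac{1}{2\lambda}$ for the lossless channel with Poisson energy arrivals of rate $\lambda$. I would fix an arbitrary admissible policy obeying (\ref{onlinepolicy}), lower bound its average age on a per-sample-path basis, and then take expectations, so that the resulting bound automatically applies to the infimum over all such policies.

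First I would exploit that, when $p=1$, every update resets the expected age exactly to zero, so that $\hat{\Delta}(t)=t-t_{k}$ on each inter-update interval $[t_{k},t_{k+1})$. Hence on any sample path, $\int_{0}^{T}\hat{\Delta}(t)\,dt = \frac{1}{2}\sum_{k=0}^{N_U(T)-1}X_{k}^{2}+\frac{1}{2}\bigl(T-t_{N_U(T)}\bigr)^{2}$. The $N_U(T)+1$ segment lengths $X_{0},\dots,X_{N_U(T)-1},\,T-t_{N_U(T)}$ are nonnegative and sum exactly to $T$, so the Cauchy--Schwarz (power-mean) inequality gives $\int_{0}^{T}\hat{\Delta}(t)\,dt \geq \frac{T^{2}}{2\,(N_U(T)+1)}$.

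Next I would bring in the energy constraint. Since each transmission consumes one unit and the stored energy can never exceed the initial content plus the harvested amount, every admissible policy satisfies $N_U(T)\leq N_H(T)+B$. Combining with the previous bound yields, per sample path, $\frac{1}{T}\int_{0}^{T}\hat{\Delta}(t)\,dt \geq \frac{T}{2\,(N_H(T)+B+1)}$. Taking expectations and using Jensen's inequality for the convex map $x\mapsto 1/x$ together with $\mathbb{E}[N_H(T)]=\lambda T$, I obtain $\frac{\mathbb{E}[\int_{0}^{T}\hat{\Delta}(t)\,dt]}{T}\geq \frac{T}{2(\lambda T + B + 1)}$, matching the definition (\ref{aveage}). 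Letting $T\to\infty$ sends the right-hand side to $\frac{1}{2\lambda}$, which establishes $\bar{\Delta}_{B}^{*}(\lambda,1)\geq \frac{1}{2\lambda}$ and hence the theorem since $\lambda=p\mu_{H}$.

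The step I expect to require the most care is the passage from the sample-path inequality to a bound on $\bar{\Delta}$ as defined through (\ref{aveage}): one must handle the single boundary term $\frac{1}{2}(T-t_{N_U(T)})^{2}$ (it is nonnegative and may simply be dropped from the lower bound) and must ensure the energy-counting inequality $N_U(T)\leq N_H(T)+B$ holds for every policy regardless of buffer overflows, which only discard energy and therefore tighten the bound. The Jensen step is the clean way to avoid the expectation of a ratio; its only requirement, that $N_H(T)+B+1$ be strictly positive, holds automatically.
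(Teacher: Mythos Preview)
Your argument is correct, and the overall architecture---reduce to $p=1$ via Lemma~\ref{poissonbound}, then combine a convexity inequality with the energy budget---matches the paper's. The execution of the $p=1$ step differs, however. The paper first invokes Lemma~\ref{averageagelim} (together with Proposition~\ref{finitemoments} to justify finiteness) to write $\bar{\Delta}$ as the ratio of the Ces\`aro limits $\frac{1}{2N}\sum_{k}\mathbb{E}[X_{k}^{2}]$ and $\frac{1}{N}\sum_{k}\mathbb{E}[X_{k}]$, applies $\mathbb{E}[X_{k}^{2}]\geq\mathbb{E}[X_{k}]^{2}$ and then Jensen on the Ces\`aro mean to get $\bar{\Delta}\geq\frac{1}{2}\lim_{N}\frac{1}{N}\sum_{k}\mathbb{E}[X_{k}]$, and finally uses that the long-run mean inter-update time cannot fall below $1/\mu_{H}$. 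You instead stay at finite horizon and on the sample path: Cauchy--Schwarz over the $N_{U}(T)+1$ segments gives $\int_{0}^{T}\hat{\Delta}\geq T^{2}/\bigl(2(N_{U}(T)+1)\bigr)$, the pathwise energy bookkeeping $N_{U}(T)\leq N_{H}(T)+B$ replaces the asymptotic rate bound, and Jensen for $x\mapsto 1/x$ with $\mathbb{E}[N_{H}(T)]=\lambda T$ yields the finite-$T$ inequality $\frac{1}{T}\mathbb{E}\int_{0}^{T}\hat{\Delta}\geq T/\bigl(2(\lambda T+B+1)\bigr)$, whose limit is $1/(2\lambda)$. Your route is slightly more elementary in that it bypasses Lemma~\ref{averageagelim} and the existence of the limits (\ref{averageX})--(\ref{averageC}), and it applies verbatim to any admissible policy rather than relying on the threshold structure; the paper's route, on the other hand, makes the renewal-reward interpretation (half the mean inter-update time) more transparent. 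One small clarification: you should \emph{retain} the boundary term $\tfrac{1}{2}(T-t_{N_{U}(T)})^{2}$ in the Cauchy--Schwarz step (so the segments sum to $T$), not drop it as your last paragraph suggests.
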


\begin{proof}
First, consider the case for $p=1$ where $c_{k}=\frac{1}{2}\mathbb {E}[X_{k}^{2}]$ and $\mathbb {E}[X_{k}]$ are finite by Proposition \ref{finitemoments}. This means that the average age can be expressed as follows by Lemma \ref{averageagelim}:

\[
\bar{\Delta}=\frac{\lim_{N\rightarrow +\infty}\frac{1}{2N}\sum_{k=0}^{N}\mathbb {E}[X_{k}^{2}]}{\lim_{N\rightarrow +\infty}\frac{1}{N}\sum_{k=0}^{N}\mathbb {E}[X_{k}]}
\] 

As $\mathbb {E}[X_{k}^{2}] \geq \mathbb {E}[X_{k}]$:

\[
\frac{\frac{1}{2N}\sum_{k=0}^{N}\mathbb {E}[X_{k}^{2}]}{\frac{1}{N}\sum_{k=0}^{N}\mathbb {E}[X_{k}]} \geq \frac{\frac{1}{2N}\sum_{k=0}^{N}\mathbb {E}[X_{k}]^{2}}{\frac{1}{N}\sum_{k=0}^{N}\mathbb {E}[X_{k}]}
\]
By Jensen's inequality, $\frac{1}{N}\sum_{k=0}^{N}\mathbb {E}[X_{k}]^{2}\geq \left( \frac{1}{N}\sum_{k=0}^{N}\mathbb {E}[X_{k}]\right)^{2} $ hence $
\bar{\Delta}\geq\lim_{N\rightarrow +\infty}
\frac{1}{2N}\sum_{k=0}^{N}\mathbb {E}[X_{k}]
$
which means $\bar{\Delta} \geq \frac{1}{2\mu_{H}}$ as $\lim_{N\rightarrow +\infty}
\frac{1}{N}\sum_{k=0}^{N}\mathbb {E}[X_{k}]$ cannot be lower than $\frac{1}{\mu_{H}}$ for a given Poisson energy arrival process with rate $\mu_{H}$. Accordingly, $\bar{\Delta}_{B}^{*}(\mu_{H},1)=\frac{1}{2\mu_{H}}+f_{B}^{*}(\mu_{H})
$
where $f_{B}^{*}(\mu_{H})$ is a nonnegative function of $\mu_{H}$. Therefore, by Lemma  \ref{poissonbound}, $\bar{\Delta}_{B}^{*}(\mu_{H},p)\geq\frac{1}{2p\mu_{H}}+f_{B}^{*}(p\mu_{H})$.

\end{proof}

\begin{theorem}
\label{equidistantupdates}
When $B=\infty$ and $p=1$, the threshold policy that obeys (\ref{onlinepolicy}) and has thresholds $\tau =\tau_{m}=\frac{1}{\mu_{H}}+\epsilon$ for every $m>0$ and an arbitrary $\epsilon>0$, achieves the average age $\bar{\Delta}=\frac{1}{2}(\frac{1}{\mu_{H}}+\epsilon)$ as the energy arrival process is Poisson with rate $\mu_{H}$.
\end{theorem}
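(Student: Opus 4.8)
The plan is to show that, because the threshold $\tau=\tfrac{1}{\mu_{H}}+\epsilon$ is strictly larger than the mean energy inter-arrival time $\tfrac{1}{\mu_{H}}$, the infinite battery fills up and the sensor asymptotically never starves, so that every inter-update interval becomes deterministically equal to $\tau$; the average age of a sawtooth of constant period $\tau$ is then $\tau/2$. Two structural simplifications are available at the outset: since $B=\infty$, the cap indicator $\mathbbm{1}_{\{E(t)\neq B\}}$ in (\ref{energyupdate}) is identically one, so no harvested energy is ever discarded; and since $p=1$, the age is reset to zero after each update, whence $\hat{\Delta}(t)=t-t_{k}$ on $[t_{k},t_{k+1})$ and $c_{k}=\tfrac{1}{2}\mathbb{E}[X_{k}^{2}]$.

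First I would embed the battery level at update instants. Let $b_{k}$ be the battery just after the $k$-th update and $M_{k}$ the number of energy arrivals while the age climbs from $0$ to $\tau$; by the independent-increments property of the Poisson process the $M_{k}$ are i.i.d.\ with $M_{k}\sim\mathrm{Poisson}(\lambda)$, $\lambda=\mu_{H}\tau=1+\mu_{H}\epsilon$. Reading off the policy (\ref{onlinepolicy}) gives the reflected recursion $b_{k+1}=\max(b_{k}+M_{k}-1,\,0)$: whenever $b_{k}+M_{k}\geq 1$ the update fires exactly when the age reaches $\tau$, so $X_{k}=\tau$, and only in the starvation event $\{b_{k}=0,\,M_{k}=0\}$ must the sensor wait an extra residual exponential time $R_{k}$, giving $X_{k}=\tau+R_{k}$ and $b_{k+1}=0$.

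The key step is to prove $b_{k}\to\infty$ almost surely. Comparing the reflected chain with the free walk $u_{k}=b_{0}+\sum_{j<k}(M_{j}-1)$, a one-line induction gives $b_{k}\geq u_{k}$; since $\mathbb{E}[M_{j}-1]=\mu_{H}\epsilon>0$, the strong law forces $u_{k}\to\infty$ and hence $b_{k}\to\infty$. Therefore the starvation event occurs only finitely often, so $X_{k}=\tau$ for all but finitely many $k$. The finitely many exceptional durations $\tau+R_{k}$ are almost surely finite and, by the moment bounds of Proposition \ref{finitemoments}, are washed out of the Cesàro averages: $\bar{X}_{N}\to\tau$ and, using $c_{k}=\tfrac{1}{2}\mathbb{E}[X_{k}^{2}]\to\tfrac{1}{2}\tau^{2}$, $\bar{C}_{N}\to\tfrac{1}{2}\tau^{2}$. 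With both limits (\ref{averageX}) and (\ref{averageC}) established, Lemma \ref{averageagelim} yields $\bar{\Delta}=\bar{C}/\bar{X}=\tau/2=\tfrac{1}{2}(\tfrac{1}{\mu_{H}}+\epsilon)$.

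The hard part will be the transience argument: the whole result hinges on the drift $\lambda-1=\mu_{H}\epsilon$ being positive, which is precisely what the choice $\tau>\tfrac{1}{\mu_{H}}$ guarantees, together with confirming that the transient, energy-starved phase is negligible in the long-run average. Once these are in place the computation collapses to a routine application of Lemma \ref{averageagelim}.
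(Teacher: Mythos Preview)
Your proposal is correct and follows essentially the same route as the paper: establish that the positive drift $\mu_{H}\tau-1=\mu_{H}\epsilon>0$ forces the battery to diverge, so that inter-update intervals are eventually deterministically equal to $\tau$, and then invoke Lemma~\ref{averageagelim}. The only cosmetic difference is that you run the drift argument on the embedded chain $b_{k+1}=\max(b_{k}+M_{k}-1,0)$ via comparison with the unreflected walk, whereas the paper works in continuous time using $E(t)\geq N_{H}(t)-t/\tau$ together with $N_{H}(t)/t\to\mu_{H}$; both arguments express the same transience phenomenon and conclude identically.
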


\begin{proof}

Consider the case where $E(t_{k-1})\geq 2$ for some $k>1$. If $E(t_{k-1})\geq 2$, it is guaranteed that $\hat{\Delta}(t_{k})=0$ and $X_{k}=\frac{1}{\mu_{H}}+\epsilon$. Accordingly,

\[
\mathbb {E}\left[ \displaystyle\int_{t_{k}}^{t_{k+1}}\hat{\Delta}(t)  dt \mid E(t_{k-1})\geq 2\right] = \frac{1}{2}(\frac{1}{\mu_{H}}+\epsilon)^{2}
\]
\[
\mathbb {E}\left[X_{k}\mid E(t_{k-1})\geq 2\right] = \frac{1}{\mu_{H}}+\epsilon
\]
Now consider the energy state $E(t)$ at an arbitrary time $t$ which satisfies $E(t)=N_{H}(t)-N_{U}(t)$ from Eq.\ref{energyupdate}. As $\hat{\Delta}(t)=0$ after every update, age increases linearly toward the threshold $\tau = \frac{1}{\mu_{H}}+\epsilon$, inter-update durations cannot be shorter than $\frac{1}{\mu_{H}}+\epsilon$ So $N_{U}(t)\leq t/(\frac{1}{\mu_{H}}+\epsilon)$ and $E(t) \geq N_{H}(t)-t/(\frac{1}{\mu_{H}}+\epsilon)$. The energy harvested until time $t$ is Poisson distributed with mean $t\mu_{H}$ which means $\Pr(\displaystyle\lim_{t\rightarrow +\infty}N_{H}(t)/t=\mu_{H})=1 $, accordingly
$\Pr(\displaystyle\lim_{t\rightarrow +\infty}E(t)/t \geq \mu_{H}-\frac{\mu_{H}}{1+\mu_{H}\epsilon})=1$ thus for any $M>0$ $\displaystyle\lim_{t\rightarrow +\infty}\Pr(E(t) \leq M)=0$. This shows $\displaystyle\lim_{k\rightarrow +\infty}\Pr(E(t_{k}) \leq M)=0$ (**) for any finite $M$. As,

\[
c_{k}=\mathbb {E}\left[ \displaystyle\int_{t_{k}}^{t_{k+1}}\hat{\Delta}(t)  dt \mid E(t_{k-1})\geq 2\right]\Pr(E(t_{k-1})\geq 2)
\]
\[
+\mathbb {E}\left[ \displaystyle\int_{t_{k}}^{t_{k+1}}\hat{\Delta}(t)  dt \mid E(t_{k-1})< 2\right]\Pr(E(t_{k-1})< 2) 
\](*)

As $k\rightarrow +\infty$, the second term in the RHS of (*) vanishes by (**) hence we obtain
$\displaystyle\lim_{k\rightarrow +\infty} c_{k} = \frac{1}{2}(\frac{1}{\mu_{H}}+\epsilon)^{2}$ similarly,
$\displaystyle\lim_{k\rightarrow +\infty} \mathbb {E}[X_{k}] = \frac{1}{\mu_{H}}+\epsilon$.

Therefore, by Lemma \ref{averageagelim},

\[
\bar{\Delta} = \frac{\displaystyle\lim_{N\rightarrow +\infty}\frac{1}{N}\sum_{k=0}^{N}c_{k}}{\displaystyle\lim_{N\rightarrow +\infty}\frac{1}{N}\sum_{k=0}^{N}\mathbb {E}[X_{k}] }=\frac{1}{2}(\frac{1}{\mu_{H}}+\epsilon)
\]

\end{proof}

Note that the policy in Theorem \ref{equidistantupdates} achieves the lower limit on the average age in Theorem \ref{lowestage}, for $p=1$ when $\epsilon$ is arbitrarily small.

\bibliography{AgeOfInformation}

\begin{thebibliography}{10}
\providecommand{\url}[1]{#1}
\csname url@samestyle\endcsname
\providecommand{\newblock}{\relax}
\providecommand{\bibinfo}[2]{#2}
\providecommand{\BIBentrySTDinterwordspacing}{\spaceskip=0pt\relax}
\providecommand{\BIBentryALTinterwordstretchfactor}{4}
\providecommand{\BIBentryALTinterwordspacing}{\spaceskip=\fontdimen2\font plus
\BIBentryALTinterwordstretchfactor\fontdimen3\font minus
  \fontdimen4\font\relax}
\providecommand{\BIBforeignlanguage}[2]{{%
\expandafter\ifx\csname l@#1\endcsname\relax
\typeout{** WARNING: IEEEtran.bst: No hyphenation pattern has been}%
\typeout{** loaded for the language `#1'. Using the pattern for}%
\typeout{** the default language instead.}%
\else
\language=\csname l@#1\endcsname
\fi
#2}}
\providecommand{\BIBdecl}{\relax}
\BIBdecl

\bibitem{Kaul2011}
S.~Kaul, M.~Gruteser, V.~Rai, and J.~Kenney, ``Minimizing age of information in
  vehicular networks,'' in \emph{Sensor, Mesh and Ad Hoc Communications and
  Networks (SECON), 2011 8th Annual IEEE Communications Society Conference on},
  June 2011, pp. 350--358.

\bibitem{Kaul2012}
S.~Kaul, R.~Yates, and M.~Gruteser, ``Real-time status: How often should one
  update?'' in \emph{INFOCOM 2012}, pp. 2731--2735.

\bibitem{ZviedrisESMS10}
R.~Zviedris, A.~Elsts, G.~Strazdins, A.~Mednis, and L.~Selavo, ``Lynxnet: Wild
  animal monitoring using sensor networks,'' in \emph{REALWSN 2010}, 2010, pp.
  170--173.

\bibitem{blueforce}
K.~R. Chevli, P.~Kim, A.~Kagel, D.~Moy, R.~Pattay, R.~Nichols, and A.~D.
  Goldfinger, ``Blue force tracking network modeling and simulation,'' in
  \emph{MILCOM 2006}, Oct 2006, pp. 1--7.

\bibitem{Ephremides2013}
C.~Kam, S.~Kompella, and A.~Ephremides, ``Age of information under random
  updates,'' in \emph{IEEE ISIT}, July 2013, pp. 66--70.

\bibitem{Ephremides2014}
M.~Costa, M.~Codreanu, and A.~Ephremides, ``Age of information with packet
  management,'' in \emph{IEEE ISIT}, June 2014, pp. 1583--1587.

\bibitem{Huang2015}
L.~Huang and E.~Modiano, ``Optimizing age-of-information in a multi-class
  queueing system,'' in \emph{IEEE ISIT}, June 2015, pp. 1681--1685.

\bibitem{Pappas2015}
N.~Pappas, J.~Gunnarsson, L.~Kratz, M.~Kountouris, and V.~Angelakis, ``Age of
  information of multiple sources with queue management,'' in \emph{2015 ICC},
  June 2015, pp. 5935--5940.

\bibitem{Ephremides2016}
C.~Kam, S.~Kompella, G.~D. Nguyen, and A.~Ephremides, ``Effect of message
  transmission path diversity on status age,'' \emph{IEEE Transactions on
  Information Theory}, vol.~62, no.~3, pp. 1360--1374, March 2016.

\bibitem{Najm2016}
E.~Najm and R.~Nasser, ``Age of information: The gamma awakening,'' in
  \emph{IEEE ISIT}, July 2016, pp. 2574--2578.

\bibitem{DBLP:journals/corr/YatesK16}
\BIBentryALTinterwordspacing
R.~D. Yates and S.~K. Kaul, ``The age of information: Real-time status updating
  by multiple sources,'' \emph{CoRR}, vol. abs/1608.08622, 2016. [Online].
  Available: \url{http://arxiv.org/abs/1608.08622}
\BIBentrySTDinterwordspacing

\bibitem{YinSunInfocom2016}
Y.~Sun, E.~Uysal-Biyikoglu, R.~Yates, C.~E. Koksal, and N.~B. Shroff, ``Update
  or wait: How to keep your data fresh,'' in \emph{IEEE INFOCOM 2016}, April
  2016, pp. 1--9.

\bibitem{Kadota2016}
R.~S. I.~Kadota, E. Uysal-Biyikoglu and E.~Modiano, ``Rminimizing age of
  information in broadcast wireless networks,'' in \emph{54th Annual Allerton
  Conf.On on Communi­cation, Control, and Computing}, September 2016.

\bibitem{Nayyar2013}
D.~T. A.~Nayyar, T.~Basar and V.~V. Veeravalli, ``Optimal strategies for
  communication and remote estimation with an energy harvesting sensor,''
  \emph{{IEEE} Transactions on Automatic Control}, vol.~58, no.~9, 2013.

\bibitem{7308962}
B.~T. Bacinoglu, E.~T. Ceran, and E.~Uysal-Biyikoglu, ``Age of information
  under energy replenishment constraints,'' in \emph{2015 ITA}, Feb 2015, pp.
  25--31.

\bibitem{Yates2015}
R.~D. Yates, ``Lazy is timely: Status updates by an energy harvesting source,''
  in \emph{IEEE ISIT}, June 2015, pp. 3008--3012.

\bibitem{YinSunISIT2017}
\BIBentryALTinterwordspacing
Y.~Sun, Y.~Polyanskiy, and E.~Uysal-Biyikoglu, ``Age-of-information and remote
  estimation over a channel with random delay,'' 2017. [Online]. Available:
  \url{http://www2.ece.ohio-state.edu/~suny/paper/conference/sampling_Wiener_c%
onf.pdf}
\BIBentrySTDinterwordspacing

\end{thebibliography}
\end{document}